\documentclass{article}

\usepackage{arxiv}

\usepackage[utf8]{inputenc} 
\usepackage[T1]{fontenc}    
\usepackage{hyperref}       
\usepackage{url}            
\usepackage{booktabs}       
\usepackage{amsfonts}       
\usepackage{nicefrac}       
\usepackage{microtype}      
\usepackage{lipsum}

\usepackage{cite}
\usepackage{amsmath,amssymb,amsfonts}
\usepackage{graphicx}
\usepackage{algorithm}
\usepackage[noend]{algpseudocode}
\usepackage{hyperref}
\usepackage{textcomp}
\usepackage{threeparttablex} 

\usepackage{mathtools}
\usepackage{caption}
\usepackage{float}
\usepackage{stmaryrd}
\usepackage{epstopdf}
\usepackage{multirow}
\usepackage{pifont}
\usepackage{subcaption}
\usepackage{xcolor}
\usepackage{booktabs}
\usepackage[table]{xcolor}

\newtheorem{theorem}{Theorem}[section]

\newtheorem{assumption}{Assumption}

\newtheorem{remark}[theorem]{Remark}
\newtheorem{problem}[theorem]{Problem}
\newenvironment{proof}{\par\noindent\textbf{Proof.} }{\hfill$\blacksquare$\par}

\title{ Glycemic Safety Tube: A Provably Safe Control Framework for Artificial Pancreas Systems under Parametric Uncertainty               
\thanks{ This work was supported through the Kotak IISc AI-ML Centre (KIAC), Indian Institute of Science, Bengaluru, with support from Google.}
}

\author{
 Pukhrambam Akash Singh \\
  Robert Bosch Centre for Cyber-Physical Systems\\
  IISc, Bengaluru, India\\
  \texttt{pukhrambams@iisc.ac.in} \\
   \And
 Ratnangshu Das \\
  Robert Bosch Centre for Cyber-Physical Systems\\
  IISc, Bengaluru, India\\
  \texttt{ratnangshud@iisc.ac.in} \\
  \And
 Ahan Basu \\
  Robert Bosch Centre for Cyber-Physical Systems\\
  IISc, Bengaluru, India\\
  \texttt{ahanbasu@iisc.ac.in} \\
   \And
 Pushpak Jagtap \\
  Robert Bosch Centre for Cyber-Physical Systems\\
  IISc, Bengaluru, India\\
  \texttt{pushpak@iisc.ac.in} \\
}

\begin{document}
\maketitle

\begin{abstract}
Type~1 diabetes eliminates the body’s ability to produce insulin, making glucose regulation entirely dependent on external insulin delivery and the control algorithm. Existing closed-loop methods either rely on accurate patient-specific models or do not provide formal safety guarantees, and are often computationally demanding for wearable devices.
This paper proposes Glycemic Safety Tube Control (GSTC), a model-free and computationally efficient control framework for automated insulin delivery. The method enforces clinically relevant safety bounds on glucose levels by design, ensuring that glucose remains within a prescribed safe range. We also derive feasibility conditions that guarantee safety and input constraint satisfaction under bounded meal disturbances and estimation errors.
The performance of GSTC is evaluated against state-of-the-art methods, including linear and nonlinear model predictive control and sliding mode control. The results demonstrate that GSTC maintains safety under varying meal patterns and patient conditions, highlighting its robustness and computational efficiency. Overall, GSTC provides a safe, efficient, and patient-independent approach for next-generation artificial pancreas systems.
\end{abstract}

\keywords{
Artificial Pancreas Systems; Automated Insulin Delivery; Blood Glucose Control; Parametric Uncertainty; Formal Guarantees; Safety-Critical Control; Real-time Implementation}

\section{Introduction}
\label{sec:intro}
Type~1 diabetes mellitus (T1D) is a chronic autoimmune condition in which destruction of pancreatic $\beta$-cells eliminates endogenous insulin secretion entirely, requiring lifelong exogenous delivery for survival~\cite{atkinson2001type1}. As reported by \cite{IDF2024Atlas}, the International Diabetes Federation estimates 589~million adults currently live with diabetes, with T1D being the predominant form amongst children and adolescents, and projections reaching 853~million by 2050. Conventional management imposes a continuous cognitive burden on patients~\cite{peyrot2012insulin}: underdosing drives chronic hyperglycemia with vascular, renal, and neurological consequences~\cite{dcct1993intensive}, while overdosing precipitates hypoglycemia that may progress into hypoglycemic coma or death~\cite{cameron2011risk}.

\begin{figure}
    \centering
    \includegraphics[width=0.6\linewidth]{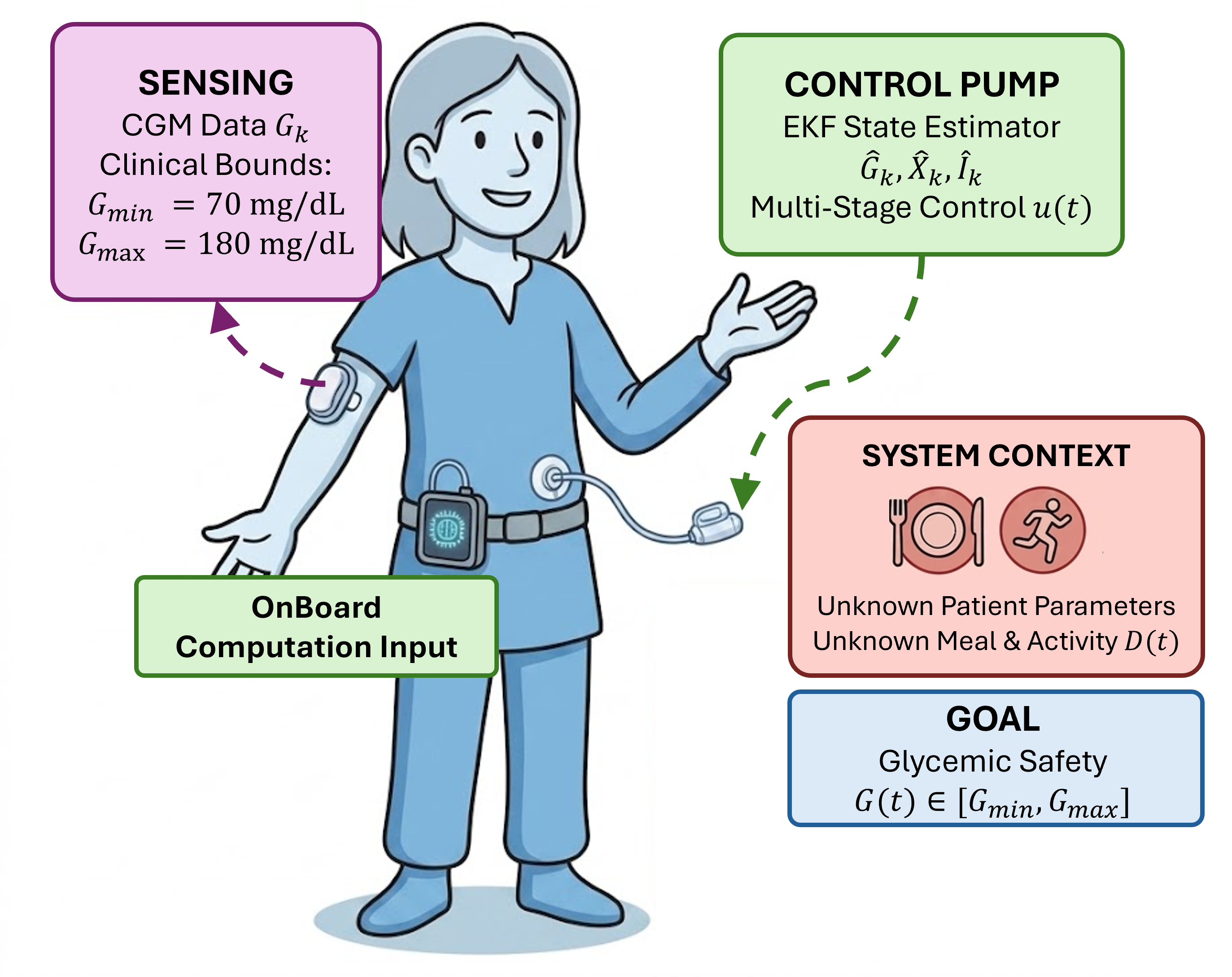}
    \caption{Schematic representation of the Automated Insulin Delivery system, illustrating the physical integration of the Continuous Glucose Monitor (CGM) and insulin pump with the Glycemic Safety Tube Control (GSTC) architecture.}
    \label{fig:AID}
\end{figure}

The Artificial Pancreas (AP) closes the loop between a Continuous Glucose Monitor (CGM) and a subcutaneous insulin pump through an embedded control algorithm~\cite{Doyle2025twenty}. \cite{Battelino2019} established the internationally agreed clinical target, that requires at least 70\% of readings within 70--180~mg/dL, with strict limits on time below 70~mg/dL. Commercial hybrid closed-loop systems have demonstrated meaningful progress~\cite{zhong2016minimed640g, pinsker2020basaliq}, and their real-world adoption continues to grow~\cite{Considine2024realworld}. However, fully autonomous glucose control in free-living conditions remains 
a challenging problem~\cite{Russell2014bionic}.

The risk profile in glucose regulation is asymmetric: hypoglycemia must be strictly avoided and is typically prioritized over tight upper-range control~\cite{kovatchev1997symmetrization,cameron2011risk}. In practice, unannounced meals can cause sudden increases in glucose that the controller cannot anticipate. At the same time, insulin sensitivity varies not only across patients, but also within the same individual over the course of a day, as reported by~\cite{hovorka2004nonlinear}. These factors make the system uncertain and difficult to model accurately. 
As a result, the controller must maintain safety within strict limits while operating without patient-specific calibration or meal information.

Several control approaches have been proposed for automated insulin delivery. PID control is simple but lacks physiological awareness and cannot handle input constraints, often leading to insulin over-delivery during meals~\cite{steil2004closedloop}. Model-based methods such as MPC~\cite{magni2007mpc} and CBF~\cite{saxena2025safe} can handle constraints more systematically, but they rely heavily on accurate patient-specific models, which are nonlinear, time-varying, and difficult to obtain in practice~\cite{Wilinska2010}. Observer-based~\cite{nath2019observer,Targui2024observer,Hooshmandi2024LPV} and learning-based~\cite{Daskalaki2013,fox2020deep} approaches improve adaptability to uncertainty and patient variability, but they generally lack formal safety guarantees, which is critical in medical applications. In addition, many of these methods are computationally expensive, making real-time implementation on wearable devices challenging. Overall, no existing approach simultaneously provides model-free operation, formal safety guarantees, and computational efficiency.

Prescribed performance control (PPC) by \cite{bechlioulis2014low} and the spatiotemporal tubes (STT) framework by \cite{das2025spatiotemporal} provide a promising alternative to address these challenges. These approaches enable approximation-free, closed-form control laws that constrain system trajectories within predefined bounds~\cite{RDas2024,das2025real}, while remaining computationally efficient and not requiring explicit knowledge of system parameters. However, a key limitation is that they do not inherently account for input constraints. Recent work by~\cite{das2025approximation} addresses this issue by introducing modified transformation functions, along with feasibility conditions, to ensure constraint satisfaction.

Building on these advances in approximation-free control, we adapt the framework to the glucose regulation setting through the proposed \emph{Glycemic Safety Tube} (GST). The GST captures clinically established safety bounds on glucose evolution and enforces them by design. 
Unlike the systems considered by~\cite{das2025approximation}, the glucose--insulin dynamics exhibit a cascaded structure with coupled subsystems corresponding to glucose, remote insulin action, and plasma insulin. This requires a hierarchical control design. In addition, insulin infusion is inherently one-sided (non-negative), so standard symmetric control transformations cannot be used. To address this, we develop new transformation functions along with corresponding feasibility conditions that ensure the safe evolution of the glucose level. The resulting GST-based controller is model-free, computationally lightweight, and does not require patient-specific identification, online optimization, or learning-based approximation, relying only on the basal glucose value.

This paper makes the following contributions:
\begin{itemize}
    \item We introduce Glycemic Safety Tubes (GST), which encode clinically meaningful safety bounds and ensure that glucose remains within these bounds.    
    
    \item We propose a cascaded, closed-form control framework for the glucose--insulin system that explicitly accounts for its three-layer structure: glucose, remote insulin action, and plasma insulin.

    \item We design new transformation functions and derive feasibility conditions to handle asymmetric input constraints, ensuring that actuator limits are respected while maintaining safety under worst-case meal disturbances and estimation errors.

    \item We provide formal guarantees that glucose remains within safe limits and actuator constraints are satisfied, for all patients within a given uncertainty set, without requiring patient-specific model identification.

    \item We demonstrate the effectiveness of the proposed approach through simulations on both the Bergman minimal model and a high-fidelity 13-state preclinical simulator for different patient and meal conditions. We also compare the results with benchmark controllers.
\end{itemize}


\section{System Description and Problem Formulation}
This section presents the glucose-insulin interaction model, states the assumptions on system uncertainty, and formally defines the control problem addressed in this work.

\subsection{System Model}
\label{sec:model}
The Bergman Minimal Model~\cite{Bergman1981, Fisher68209}, describes glucose--insulin dynamics using three states: the blood glucose deviation $G(t)$ from its basal value $G_b$, the remote insulin action $X(t)$, and the plasma insulin deviation $I(t)$ from its basal value $I_b$. The system dynamics are given by:
\begin{align}\label{eqn:sysdyn}
    \dot{G} = -S_G G - X(G + G_b) + D, \quad
    \dot{X} = - P_2 X + P_3 I, \quad
    \dot{I} = -n(I + I_b) + {u}/{V}, 
\end{align}
where $S_G, P_2, P_3,$ and $n$ are patient-specific parameters, $V$ is the insulin distribution volume (see Table~\ref{tab:params}). $u(t)$ is the insulin infusion rate (control input).
While the structure of these dynamics is fixed, the parameters $S_G, P_2, P_3,$ and $n$ vary across and within patients, and the meal input $D(t)$ is unknown in practice. The following assumptions formalize the extent of this uncertainty and form the basis on which formal safety guarantees are established.

\begin{assumption}\label{assum:param_bound}
The patient-specific parameters are uncertain but bounded. In particular,
$$
S_G \in [\underline{S}_G, \overline{S}_G], \quad
P_2 \in [\underline{P}_2, \overline{P}_2], \quad
P_3 \in [\underline{P}_3, \overline{P}_3], \quad
n \in [\underline{n}, \overline{n}],
$$
where all bounds are known positive constants.
\end{assumption}
This assumption reflects inter- and intra-patient variability in physiological parameters. While the exact values are unknown, their ranges can be estimated from clinical data and prior studies~\cite{nath2019observer}. This enables robust control design without requiring precise patient identification.

\begin{assumption}\label{assum:meal_bound}
The exogenous meal disturbance $D(t)$ is unknown and time-varying, but bounded as
$$
0 \leq D(t) \leq \overline{D}.
$$
\end{assumption}
This assumption captures the uncertainty in meal timing and carbohydrate intake, which are typically not known a priori in practical settings. Only a bound on the disturbance magnitude is assumed, allowing the controller to operate without explicit meal prediction or announcement.

The control input (insulin infusion, mU/min) $u(t)$ is subject to actuator constraints
$$
u(t) \in [0, \overline{u}].
$$

This setup captures key practical limitations in artificial pancreas systems. The exact patient parameters are unknown due to physiological variability, and meal disturbances are not known a priori. Additionally, the insulin infusion input is constrained by pump limitations. The control design must therefore ensure safety and performance under bounded uncertainty without relying on precise model knowledge or disturbance prediction.

\begin{table*}[t]
\caption{Population-average parameters of the Bergman Minimal Model~\cite{10.2337/diacare.8.6.553}.}
\label{tab:params}
\centering
\begin{tabular}{l l c l}
\hline
\textbf{Parameter} & \textbf{Description} & \textbf{Value} & \textbf{Units} \\
\hline
$S_G$ & Glucose effectiveness & $0.028$ & min$^{-1}$ \\
$P_2$ & Remote insulin rate constant & $0.025$ & min$^{-1}$ \\
$P_3$ & Insulin--glucose coupling & $1.3 \times 10^{-5}$ & (mU/L)$^{-1}$ min$^{-2}$ \\
$G_b$ & Basal glucose level & patient-specific & mg/dL \\
$I_b$ & Basal plasma insulin & patient-specific & mU/mL \\
$n$   & Insulin clearance rate & $5/54$ & min$^{-1}$ \\
$V$   & Insulin distribution volume & $12$ & L \\
\hline
\end{tabular}
\end{table*}

    
    


\subsection{Control Objective}

The goal is to maintain glucose $G(t) \in [\underline{G}, \overline{G}]$ for all $t > 0$,  under unannounced meals and patient parameter uncertainty. The control problem is formally stated as follows:

\begin{problem}\label{prob}
    Given the system~\eqref{eqn:sysdyn} and bounds $[\underline{G}, \overline{G}]$, design a model-free feedback control law $u(t)$, such that: 
    \begin{itemize}
        \item[(i)] the glucose level remains strictly within the safe range 
        $$G(t) \in [\underline{G}, \overline{G}], \quad \forall t > 0,$$
        \item[(ii)] the insulin infusion pump limit is constrained 
        $$u(t) \in [0,\overline{u}], \quad \forall t > 0.$$
    \end{itemize}
\end{problem}

Problem~\ref{prob} is stated in terms of the true glucose state $G(t)$, but in practice, only a noisy glucose measurement is available through the CGM; the internal states $X(t)$ and $I(t)$ are entirely inaccessible to direct measurement. The controller must therefore operate on reconstructed state estimates, and those estimates must come with certified error bounds that can be incorporated into the control design as hard guarantees rather than probabilistic assumptions.  

\subsection{State Estimation and Uncertainty Characterization}
\label{subsec:state_est}

The states $X$ and $I$ in~\eqref{eqn:sysdyn} are not directly measurable; only glucose is accessible via CGM~\cite{Boiroux2017}. These fundamental observability constraints necessitate state estimation~\cite{parker1999modelbased, hovorka2004nonlinear}. The bilinear coupling in~\eqref{eqn:sysdyn} motivates the use of a standard Extended Kalman Filter~\cite{Simon10.5555/1146304} to reconstruct $\hat{\mathbf{x}}_k = [\hat{G}_k,\,\hat{X}_k,\,\hat{I}_k]^\top$ via two steps at each instant: a prediction step that propagates the model forward, and an update step that fuses the CGM measurement. The superscript $(-)$ denotes the a priori estimate produced by the prediction step, before measurement fusion. Discretizing~\eqref{eqn:sysdyn} with sampling time ($T_s$) gives:
\begin{align}
  \hat{G}_{k+1}^{-} &= \hat{G}_k + T_s\!\left[
      -S_G\hat{G}_k - \hat{X}_k(\hat{G}_k+G_b)\right], 
      \label{eq:Gpred}\\
  \hat{X}_{k+1}^{-} &= \hat{X}_k + T_s\!\left[
      P_3\hat{I}_k - P_2\hat{X}_k\right], \label{eq:Xpred}\\
  \hat{I}_{k+1}^{-} &= \hat{I}_k + T_s\!\left[
      -n(\hat{I}_k+I_b) + \frac{u_k}{V_I}\right], \label{eq:Ipred}
\end{align}
The estimator uses nominal (population-average) parameter values for $S_G$, $P_2$, $P_3$, $n$, and $V$, given in table~\ref{tab:params}. The true patient parameters may differ from these values, and this mismatch is accounted for through the controller designed in the next section.

The discrete Jacobian is
\begin{equation}
  F_k = \begin{bmatrix}
    1-T_s(S_G+\hat{X}_k) & -T_s(\hat{G}_k+G_b) & 0 \\[4pt]
    0 & 1-T_s P_2 & T_s P_3 \\[4pt]
    0 & 0 & 1-T_s n
  \end{bmatrix},
\label{eq:Fk}
\end{equation}
where propagating the error covariance as $\Sigma_{k+1}^{-} = F_k \Sigma_k F_k^\top + Q$. The innovation $(z_k - G_b) - H\hat{\mathbf{x}}_k^{-}$, where $H = [1\ 0\ 0]$ and $z_k$ is the absolute CGM reading, drives the Joseph-form update~\cite{Simon10.5555/1146304} to yield $\Sigma_{k|k}$.

After the filter converges, $\Sigma_{k|k} \to \Sigma^*$ is approximately stationary; its diagonal yields the $3\sigma$ state estimation bounds:
\begin{equation}
  [\Delta_G, \Delta_X, \Delta_I]^\top = 3\sqrt{\mathrm{diag}(\Sigma^*)}.
\label{eq:state_bounds}
\end{equation}

The derivative bounds follow from the error dynamics. 
\begin{align}\label{eq:dotD}
  \dot{\Delta}_I &:= \overline{n}\,\Delta_I, \quad
  \dot{\Delta}_X := \overline{P}_3\,\Delta_I + \overline{P}_2\,\Delta_X, \quad
  \dot{\Delta}_G := (\overline{S}_G+\overline{X})\,\Delta_G
               + (\overline{G}+G_b)\,\Delta_X + \overline{D}. 
\end{align}
Here $\overline{X}$ is the upper bound on $X$. 
Although state estimates are updated at discrete sampling instants, the control law operates in continuous time. The estimated states are held constant between sampling instants using a zero-order hold. 

The position bounds~\eqref{eq:state_bounds} define the uncertainty radius around each state estimate; the derivative bounds~\eqref{eq:dotD} define the worst-case rate at which that uncertainty can evolve. Together, they capture the uncertainty characterization that the controller of Section~\ref{sec:control} is designed to dominate at every stage. No further knowledge of the patient's physiology is assumed beyond what these six constants encode.

\section{Control Design} \label{sec:control}

The control law solving Problem~\ref{prob} follows a three-stage backstepping-inspired procedure, in the spirit of~\cite{bechlioulis2014low, das2025spatiotemporal}: reference signals for the internal states are generated progressively, ultimately yielding the physical control input.

Specifically, we first design a reference remote insulin signal $X_{ref}$ to regulate glucose within the safe range, and design a reference plasma insulin $I_{ref}$ to track $X_{ref}$. Finally, we design the insulin infusion input $u(t)$ to track $I_{ref}$. 

\subsection{Bounded Transformation Function} \label{sec:clamp}

The function $\Psi: \mathbb{R} \rightarrow [0,1]$ is a continuously differentiable, nondecreasing mapping used to enforce bounded control action. It is defined such that:
\begin{equation}
    \Psi(s) = \begin{cases} 0, & s \in (-\infty,0], \\ 1, & s \in [1,\infty), \end{cases} 
\end{equation}
and $\Psi(s)$ is nondecreasing. An example is shown in Figure~\ref{fig:sat}.

\begin{figure}
    \centering
    \includegraphics[width=0.4\linewidth]{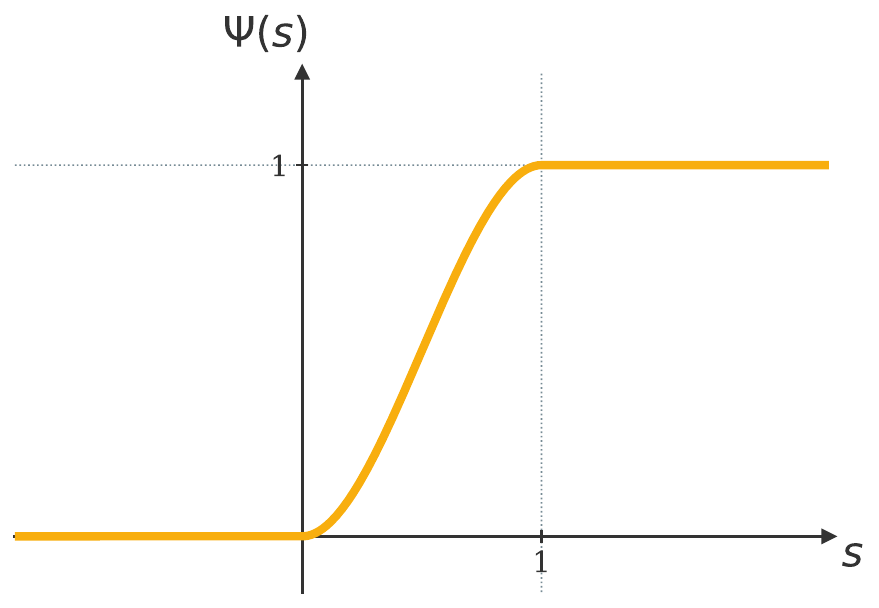}
    \caption{The bounded transformation function $\Psi(s)$, mapping the normalised error to $[0,1]$. }
    \label{fig:sat}
\end{figure}

In particular, when $s \ge 1$, the function saturates at $1$, enforcing the maximum allowable input. When $s \leq 0$, the output is $0$, corresponding to no insulin infusion.

\subsection{Stage I: Glucose Regulation} \label{subsec:stage1}

Given the safe glucose interval $[\underline{G}, \overline{G}]$, define the normalized error:
\begin{align}
    e_G(t) = \frac{G(t) - 0.5(\overline{G} + \underline{G})}{0.5(\overline{G} - \underline{G})}.
\end{align}

This normalization maps the safe set to $e_G \in [-1,1]$. The control objective is therefore equivalent to maintaining $|e_G(t)| < 1$.
We define the reference remote insulin as:
\begin{align}\label{eqn:Xref}
    X_{ref}(t) = \kappa_1 \Psi(e_G(t)),
\end{align}
where $\Psi(\cdot)$ is the bounded transformation defined in sub-section~\ref{sec:clamp}, and $\kappa_1 \in \mathbb{R}^+$ is a tunable gain.

This construction ensures that larger deviations in glucose lead to increased insulin action, while keeping the reference bounded. The glucose safety specification is now encoded entirely in $X_{ref}(t)$  . However, $X_{ref}(t)$ is a desired trajectory for a physiological state. Stage~II now derives the plasma insulin level $I_{ref}(t)$  needed to drive the remote compartment reliably onto this reference.

\subsection{Stage II: Tracking of Remote Insulin}

To track $X_{ref}(t)$, define the tracking error:
\begin{align}
    \hat{e}_X(t) = X_{ref}(t) - \hat{X}(t).
\end{align}

We enforce this error to remain within a time-varying funnel, defined by $\rho_X(t): \mathbb{R}_0^+ \to \mathbb{R}^+$:
\begin{equation}\label{eqn:rhov}
    \rho_X(t) = e^{-\mu_X t}(p_X - q_X) + q_X,
\end{equation}
where $p_X \in \mathbb{R}^+$ is the initial bound satisfying $|\hat{e}_X(0)| \leq p_X$, $q_X \in \mathbb{R}^+$ is the steady-state bound with $0 < q_X < p_X$, and $\mu_X > 0$ is the convergence rate.
The funnel constraint is $-\rho_X(t) < \hat{e}_X(t) < \rho_X(t).$
Define the normalized error:
\begin{equation}\label{eqn:norm_err2}
    e_X(t) = \frac{\hat{e}_X(t)}{\rho_X(t)}.
\end{equation}

The reference plasma insulin is then given by:
\begin{align}\label{eqn:Iref}
    I_{ref}(t) = \kappa_2 \Psi(e_X(t)),
\end{align}
where $\kappa_2 \in \mathbb{R}^+$ is a design parameter.

This guarantees that the tracking error remains within the prescribed funnel and gradually decreases, ensuring accurate tracking. Plasma insulin $I_{ref}(t)$ specifies the target, but it too is a physiological state rather than a control input. Stage~III closes this gap, mapping $I_{ref}(t)$ into the physically deliverable infusion rate $u(t)$ that the pump can execute.

\subsection{Stage III: Tracking of Plasma Insulin}

To track $I_{ref}(t)$, define the tracking error:
\begin{align}
    \hat{e}_I(t) = I_{ref}(t) - \hat{I}(t).
\end{align}

We impose a funnel constraint using $\rho_I(t): \mathbb{R}_0^+ \to \mathbb{R}^+$:
\begin{equation}
    \rho_I(t) = e^{-\mu_I t}(p_I - q_I) + q_I,
\end{equation}
where $p_I \in \mathbb{R}^+$ satisfies $|\hat{e}_I(0)| \leq p_I$, $q_I \in \mathbb{R}^+$ with $0 < q_I < p_I$, and $\mu_I > 0$.
The funnel constraint is $-\rho_I(t) < \hat{e}_I(t) < \rho_I(t).$
Define the normalized error:
\begin{equation}
    e_I(t) = \frac{\hat{e}_I(t)}{\rho_I(t)}.
\end{equation}

The control input is then given by:
\begin{align}\label{eqn:insulin_control}
    u(t) = \overline{u} \Psi(e_I(t)),
\end{align}
where $\overline{u} \in \mathbb{R}^+$ is the pump limit.

The three control laws together constitute a complete closed-form controller requiring neither online optimization nor patient-specific identification. Whether they collectively maintain the glucose safety tube under worst-case uncertainty depends on a set of explicit conditions on the design parameters, derived in the following subsection.

\subsection{Feasibility Conditions} \label{subsec:feasibility}

To ensure safe operation under worst-case disturbances and estimation errors, the following conditions must hold:

\begin{enumerate}
    \item \textbf{Glucose regulation condition:}
    \begin{align}
        (\kappa_1 - p_X - \Delta_X)(\overline{G} + G_b) > 
        \overline{D}\tag{FC1a}\label{eq:feas1} \\    
        -\underline{S}_G \underline{G} > (p_X + \Delta_X)
        (\underline{G} + G_b) \label{eq:feas1.5}\tag{FC1b}
    \end{align}

    \textit{Interpretation:} \ref{eq:feas1} bounds the gain $\kappa_1$ from below: the net insulin action available at peak glycemic excursion must exceed the worst-case meal disturbance. \ref{eq:feas1.5} governs the lower boundary where the controller has driven insulin demand to zero and the body's intrinsic glucose effectiveness must be strong enough to overcome the residual insulin action left. 

    \item \textbf{Insulin action tracking condition:}
    \begin{align}
        &\underline{P}_3(\kappa_2 - p_I - \Delta_I) - 
        \overline{P}_2(\kappa_1 - q_X + \Delta_X) > \sup_t |\dot{X}_{\mathrm{ref}}| + 
        \mu_X(p_X - q_X) + 
        \dot{\Delta}_X\label{eq:feas2} \tag{FC2a} \\
        &\underline{P}_2(q_X - \Delta_X) - 
        \overline{P}_3(p_I + \Delta_I) > \sup_t|\dot{X}_{\mathrm{ref}}| 
           + \mu_X(p_X - q_X) + 
           \dot{\Delta}_X\label{eq:feas2.5}\tag{FC2b}
    \end{align}

    \textit{Interpretation:} \ref{eq:feas2} requires that plasma insulin can drive remote insulin action upward fast enough to track $X_{\mathrm{ref}}$ against worst-case parameter uncertainty and coupling from the glucose loop. 
    \ref{eq:feas2.5} requires that natural insulin decay is fast enough to drive remote action downward when demanded. Since the pump cannot withdraw insulin, clearance is the only reduction mechanism. 

    \item \textbf{Actuation condition:}
    \begin{align}
        &\frac{\overline{u}}{V} - \overline{n}(\kappa_2 - 
        q_I + \Delta_I + I_b) > \sup_t |\dot{I}_{\mathrm{ref}}| + 
        \mu_I(p_I - q_I) + 
        \dot{\Delta}_I \label{eq:feas3}\tag{FC3a} \\
        &\underline{n}(q_I - \Delta_I + I_b) > \sup_t |\dot{I}_{\mathrm{ref}}| + 
        \mu_I(p_I - q_I) + 
        \dot{\Delta}_I \label{eq:feas3.5}\tag{FC3b}
    \end{align}

    \textit{Interpretation:} \ref{eq:feas3} is a hardware constraint: the maximum pump rate $\overline{u}$ must cover the peak plasma insulin demand after accounting for natural clearance and estimation error. 
    \ref{eq:feas3.5} ensures that the natural clearance is fast enough to reduce plasma insulin at the required rate when insulin infusion is cut to zero. 

\end{enumerate}
Each condition binds a specific design parameter, $\kappa_1$, $\kappa_2$, and $\overline{u}$, to the level of uncertainty it must overcome. 

\textit{While these requirements align with intuitive clinical understanding, such as ensuring sufficient insulin to counter meals and adequate clearance to avoid accumulation, they are formalized here as explicit feasibility conditions. 
} 

The following theorem formalizes these results.

\subsection{Main Result}

\begin{theorem}\label{thm:glucose_control}
Consider the glucose--insulin system in \eqref{eqn:sysdyn} under bounded meal disturbance $D(t) \leq \overline{D}$ and bounded estimation errors. Suppose the initial conditions satisfy
$$|e_G(0)| < 1, \quad |e_X(0)| < 1, \quad |e_I(0)| < 1,$$
and the feasibility conditions (\ref{eq:feas1}-~\ref{eq:feas3.5}) hold. Then, the proposed control law \eqref{eqn:insulin_control} guarantees that for all $t \geq 0$ :
\begin{itemize}
    \item[(i)] Glucose evolves within a safe range
    $G(t) \in [\underline{G}, \overline{G}]$
    \item[(ii)] the insulin infusion control is bounded
    $u(t) \in [0, \overline{u}],$
\end{itemize}
\end{theorem}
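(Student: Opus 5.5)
Part (ii) is immediate: \eqref{eqn:insulin_control} gives $u(t)=\overline{u}\Psi(e_I(t))$ with $\Psi$ taking values in $[0,1]$. So the work is part (i). The plan is a standard funnel/invariance argument run from the innermost loop outward, by contradiction on a first exit time.

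Let $t^*$ be the first time at which one of $|e_G|<1$, $|e_X|<1$, $|e_I|<1$ fails. This time exists and is positive by the initial conditions and continuity of the closed loop between sampling instants. On $[0,t^*)$ all three normalized errors lie in $(-1,1)$. Consequently $|\hat e_X|<\rho_X\le p_X$, $|\hat e_I|<\rho_I\le p_I$, and the true states satisfy
\[
X\in[X_{ref}-p_X-\Delta_X,\;X_{ref}+p_X+\Delta_X],\qquad I\in[I_{ref}-p_I-\Delta_I,\;I_{ref}+p_I+\Delta_I].
\]

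\textbf{Stage III.} Near $e_I\to 1$, i.e.\ for $e_I\ge 1$ in the limit, we have $\Psi=1$ and $u=\overline{u}$. We then compute
\[
\dot{\hat e}_I-\dot\rho_I\,e_I=\dot I_{ref}-\dot{\hat I}-\dot\rho_I e_I .
\]
We bound $\dot{\hat I}$ below by the true dynamics minus the error rate $\dot\Delta_I$, and use $|\dot\rho_I|\le\mu_I(p_I-q_I)$. Condition \ref{eq:feas3} then makes $\frac{d}{dt}(\hat e_I-\rho_I)<0$ at the boundary, so $e_I=1$ cannot be reached. At $e_I\to -1$ we have $u=0$, and \ref{eq:feas3.5} (clearance) gives the symmetric contradiction.

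\textbf{Stage II.} The same argument applies with $u$ replaced by $I$. At $e_X\ge 1$ we have $I_{ref}=\kappa_2$, so $I\ge\kappa_2-p_I-\Delta_I$ and $X\le\kappa_1-q_X+\Delta_X$. Condition \ref{eq:feas2} then forces $\dot{\hat X}$ to exceed $\dot X_{ref}-\dot\rho_X$. At $e_X\le -1$ we have $I_{ref}=0$, and \ref{eq:feas2.5} applies.

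\textbf{Stage I.} At $e_G=1$, $X_{ref}=\kappa_1$, so $X\ge\kappa_1-p_X-\Delta_X$. Hence
\[
\dot G\le -S_G\overline{G}-(\kappa_1-p_X-\Delta_X)(\overline{G}+G_b)+\overline{D}<0
\]
by \ref{eq:feas1}, using $\overline{G}>0$. At $e_G=-1$, $X_{ref}=0$, so $X\le p_X+\Delta_X$. Since $D\ge 0$,
\[
\dot G\ge -S_G\underline{G}-(p_X+\Delta_X)(\underline{G}+G_b)>0
\]
by \ref{eq:feas1.5}, noting $\underline{G}<0$ as a deviation. In both cases $G$ is pushed back inside. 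Therefore no error reaches the boundary at $t^*$, a contradiction, so $t^*=\infty$ and $G(t)\in[\underline{G},\overline{G}]$ for all $t\ge 0$.

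\textbf{Main obstacle.} The hardest part is rigour at the funnel boundaries. $\Psi$ saturates exactly at $|s|\ge 1$ only on the upper side ($s\ge1$) and on the lower side at $s\le0$. At $e=-1$, $\Psi$ is indeed $0$, but the boundary argument needs the sign of the derivative of the estimated error, which involves $\hat X$ and $\hat I$ held by a zero-order hold and EKF jumps. I would handle this by working with the true-state errors, inflated by the $\Delta$ bounds and their rates $\dot\Delta$ from \eqref{eq:dotD}. I would treat these bounds as given (as the theorem's ``bounded estimation errors'' hypothesis asserts), and require $\sup_t|\dot X_{ref}|$ and $\sup_t|\dot I_{ref}|$ to be finite. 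That finiteness follows from $\Psi\in C^1$ and boundedness of $\dot G$ and $\dot{\hat e}_X$ on $[0,t^*)$, which must be argued before invoking the conditions.
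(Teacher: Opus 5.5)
Your proposal is correct and follows the paper's proof essentially step for step. Both run a first-boundary-time contradiction from the plasma-insulin funnel outward to the glucose constraint, use the same worst-case state bounds at each boundary, invoke FC3a/FC3b, FC2a/FC2b and FC1a/FC1b exactly where the paper does, and read (ii) off from $\Psi\in[0,1]$. Your single joint exit time for all three errors handles the cascade coupling a little more cleanly than the paper's separate times $\tau_I,\tau_X,\tau_G$, and the zero-order-hold and $\sup_t|\dot{X}_{ref}|$, $\sup_t|\dot{I}_{ref}|$ issues you flag are ones the paper's proof likewise assumes rather than resolves.
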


\begin{proof}
The proof proceeds in three stages corresponding to the 
cascaded structure of the controller. We first note that 
by definition of the funnel $\rho(t) = (p-q)e^{-\mu t}+q$, 
its derivative satisfies $\dot{\rho}(t) = -\mu(p-q)e^{-\mu t}$, 
so that $\sup_t(-\dot{\rho}(t)) = \mu(p-q)$. This identity 
is used in each stage below to connect the feasibility 
conditions to the funnel derivative bounds.

\textbf{Stage I: Invariance of the $I$-funnel.}

We prove that $|e_I(t)| < 1$ for all $t \ge 0$ by 
contradiction. Assume there exists a first time $\tau_I > 0$ 
such that
\begin{align}\label{eq:contradiction_I}
    |e_I(\tau_I)| = 1 \implies |e_I(t)| < 1, 
    \ \forall\, t \in [0,\tau_I).
\end{align}

\textbf{Upper boundary ($e_I \to 1$):}
Since $e_I(t) < 1$ for all $t < \tau_I$, reaching 
$e_I(\tau_I)=1$ from below requires
\begin{equation}\label{eqn:stage1_contradict_up}\tag{C-I-U}
    \dot{I}_{ref} - \dot{\hat{I}} > \dot{\rho}_I 
    \implies \dot{\hat{I}} < \dot{I}_{ref} - \dot{\rho}_I.
\end{equation}
At $e_I = 1$ we have $\hat{I} = I_{\mathrm{ref}} - \rho_I$, 
hence, using the funnel and estimation bounds,
\begin{equation*}
    I \le \hat{I} + \Delta_I 
      = I_{\mathrm{ref}} - \rho_I + \Delta_I 
      \le \kappa_2 - q_I + \Delta_I,
\end{equation*}
as $I_{\mathrm{ref}} \le \kappa_2$ and $\rho_I \ge q_I$. 
As $e_I \to 1$, the control saturates to $u \to \overline{u}$. 
Substituting the upper bound on $I$ into the dynamics \eqref{eqn:sysdyn} and applying feasibility condition \eqref{eq:feas3}, we obtain
\begin{align*}
\dot{I} &\ge -\overline{n}\!\left(\kappa_2 - q_I 
          + \Delta_I + I_b\right) 
          + \frac{\overline{u}}{V} \quad (\because -n \geq -\overline{n}) \\
    \dot{I} &> \sup_t|\dot{I}_{ref}| + \mu_I(p_I-q_I) + \dot{\Delta}_I \\
    \dot{\hat{I}} &\ge \dot{I} - \dot{\Delta}_I > \sup_t|\dot{I}_{ref}| + \mu_I(p_I-q_I)\\    
    &\implies \dot{\hat{I}} > \dot{I}_{ref}(t) - \dot{\rho}_I,
\end{align*}
which contradicts \eqref{eqn:stage1_contradict_up}. 
Hence $e_I(t)$ cannot reach $1$.

\textbf{Lower boundary ($e_I \to -1$):}
Since $e_I(t) > -1$ for all $t < \tau_I$, reaching 
$e_I(\tau_I)=-1$ from above requires
\begin{equation}\label{eqn:stage1_contradict_low}\tag{C-I-L}
    \dot{I}_{ref} - \dot{\hat{I}} < -\dot{\rho}_I 
    \implies \dot{\hat{I}} > \dot{I}_{ref} + \dot{\rho}_I.
\end{equation}
At $e_I = -1$ we have $\hat{I} = I_{\mathrm{ref}} + \rho_I$, 
hence, using the funnel and estimation bounds,
\begin{equation*}
    I \ge \hat{I} - \Delta_I \ge q_I - \Delta_I,
\end{equation*}
as $I_{\mathrm{ref}} \ge 0$ and $\rho_I \ge q_I$. 
As $e_I \to -1$ the control cuts to $u \to 0$. 
Substituting the lower bound on $I$ into the dynamics and applying feasibility condition \eqref{eq:feas3.5}, we obtain
\begin{align*}
    \dot{I} &\le -\underline{n}(q_I - \Delta_I + I_b) \quad (\because -n \leq -\underline{n})\\
    \dot{I} &< -\sup_t|\dot{I}_{ref}| - \mu_I(p_I-q_I) - \dot{\Delta}_I \\
    \dot{\hat{I}} &\le \dot{I} + \dot{\Delta}_I < -\sup_t|\dot{I}_{ref}| - \mu_I(p_I-q_I) \\
    &\implies \dot{\hat{I}} < \dot{I}_{ref}(t) + \dot{\rho}_I,
\end{align*}
which contradicts \eqref{eqn:stage1_contradict_low}. 
Hence $e_I(t)$ cannot reach $-1$.

Thus $e_I(t)$ cannot reach $1$ or $-1$, i.e., $|e_I(t)| < 1$, and equivalently $|I_{ref}(t) - \hat{I}(t)| < \rho_I(t)$, for all $t \ge 0$.

\vspace{0.2cm}
\textbf{Stage II: Invariance of the $X$-funnel.}

We prove that $|e_X(t)| < 1$ for all $t \ge 0$ by 
contradiction. Assume there exists a first time $\tau_X > 0$ 
such that
\begin{align}\label{eq:contradiction_X}
    |e_X(\tau_X)| = 1 \implies |e_X(t)| < 1, 
    \ \forall\, t \in [0,\tau_X).
\end{align}

\textbf{Upper boundary ($e_X \to 1$):}
Since $e_X(t) < 1$ for all $t < \tau_X$, reaching 
$e_X(\tau_X)=1$ from below requires
\begin{equation}\label{eqn:stage2_contradict_up}\tag{C-II-U}
    \dot{X}_{ref} - \dot{\hat{X}} > \dot{\rho}_X 
    \implies \dot{\hat{X}} < \dot{X}_{ref} - \dot{\rho}_X.
\end{equation}
As $e_X \to 1$, from \eqref{eqn:Iref} we have 
$I_{ref} \to \kappa_2$. Using the funnel constraint 
$|\hat{I}-I_{ref}| < \rho_I(t) \le p_I$ (established in 
Stage~I) and the estimation bound $|I - \hat{I}| \le \Delta_I$:
\begin{equation*}
    I \ge I_{ref} - \rho_I - \Delta_I 
      \ge \kappa_2 - p_I - \Delta_I.
\end{equation*}
At $e_X = 1$ we have $\hat{X} = X_{ref} - \rho_X$, hence
\begin{equation*}
    X \le \hat{X} + \Delta_X 
      = X_{ref} - \rho_X + \Delta_X 
      \le \kappa_1 - q_X + \Delta_X,
\end{equation*}
using $X_{ref} \le \kappa_1$ and $\rho_X \ge q_X$. 
Substituting both bounds into the dynamics \eqref{eqn:sysdyn}, and applying feasibility condition \eqref{eq:feas2}, we obtain 
\begin{align*}
    \dot{X} &\ge \underline{P}_3(\kappa_2 - p_I - \Delta_I) 
             - \overline{P}_2(\kappa_1 - q_X + \Delta_X) 
             \\ &\hspace{3.4cm} (\because \ P_3 \ge \underline{P}_3, -P_2 \ge -\overline{P}_2) \\
    \dot{X} &> \sup_t|\dot{X}_{ref}| + \mu_X(p_X-q_X) + \dot{\Delta}_X   \\
    \dot{\hat{X}} &\ge \dot{X} - \dot{\Delta}_X > \sup_t|\dot{X}_{ref}| + \mu_X(p_X-q_X) \\
    &\implies \dot{\hat{X}} > \dot{X}_{ref}(t) - \dot{\rho}_X,
\end{align*}
which contradicts \eqref{eqn:stage2_contradict_up}. 
Hence $e_X(t)$ cannot reach $1$.

\textbf{Lower boundary ($e_X \to -1$):}
Since $e_X(t) > -1$ for all $t < \tau_X$, reaching 
$e_X(\tau_X)=-1$ from above requires
\begin{equation}\label{eqn:stage2_contradict_low}\tag{C-II-L}
    \dot{X}_{ref} - \dot{\hat{X}} < -\dot{\rho}_X 
    \implies \dot{\hat{X}} > \dot{X}_{ref} + \dot{\rho}_X.
\end{equation}
As $e_X \to -1$, from \eqref{eqn:Iref} we have 
$I_{ref} \to 0$. Using the funnel constraint 
$|\hat{I}-I_{ref}| < \rho_I(t) \le p_I$ (established in 
Stage~I) and the estimation bound $|I-\hat{I}| \le \Delta_I$:
\begin{equation*}
    I \le I_{ref} + \rho_I + \Delta_I \le p_I + \Delta_I.
\end{equation*}
At $e_X = -1$ we have $\hat{X} = X_{ref} + \rho_X$, hence
\begin{equation*}
    X \ge \hat{X} - \Delta_X \ge q_X - \Delta_X,
\end{equation*}
using $X_{ref} \ge 0$ and $\rho_X \ge q_X$. 
Substituting both bounds into the dynamics \eqref{eqn:sysdyn} and applying feasibility condition \eqref{eq:feas2.5}, we obtain
\begin{align*}
    \dot{X} &\le \overline{P}_3(p_I + \Delta_I) - \underline{P}_2(q_X - \Delta_X) \\
    & \hspace{3.2cm} (\because \ P_3 \le \overline{P}_3, -P_2 \le -\underline{P}_2) \\
    \dot{X} &< -\sup_t|\dot{X}_{ref}| - \mu_X(p_X-q_X) - \dot{\Delta}_X \\
    &\implies \dot{\hat{X}} < \dot{X}_{ref}(t) + \dot{\rho}_X,
\end{align*}
which contradicts \eqref{eqn:stage2_contradict_low}. 
Hence $e_X(t)$ cannot reach $-1$.

Thus $e_X(t)$ cannot reach $1$ or $-1$, i.e., $|e_X(t)| < 1$, and equivalently $|X_{ref}(t) - \hat{X}(t)| < \rho_X(t)$, for all $t \ge 0$.

\vspace{0.2cm}
\textbf{Stage III: Invariance of the glucose constraint.}

We now prove that $G(t) \in [\underline{G}, \overline{G}]$, 
equivalently $|e_G(t)| < 1$, for all $t \ge 0$ by 
contradiction. Assume there exists a first time $\tau_G > 0$ 
such that
\begin{align}\label{eq:contradiction_G}
    |e_G(\tau_G)| = 1 \implies |e_G(t)| < 1, 
    \ \forall\, t \in [0,\tau_G).
\end{align}

\textbf{Upper boundary ($G \to \overline{G}$):}
Since $G(t) < \overline{G}$ for all $t < \tau_G$, reaching 
$G(\tau_G) = \overline{G}$ from below requires
\begin{equation}\label{eqn:stage3_contradict_up}\tag{C-III-U}
    \lim_{G(t)\to\overline{G}} \dot{G}(t) \ge 0.
\end{equation}
As $G \to \overline{G}$, from \eqref{eqn:Xref} we have 
$X_{ref} \to \kappa_1$. Using the funnel constraint 
$|\hat{X}-X_{ref}| < \rho_X(t) \le p_X$ (established in 
Stage~II) and the estimation bound $|X-\hat{X}| \le \Delta_X$:
\begin{equation*}
    X \ge X_{ref} - \rho_X - \Delta_X 
      \ge \kappa_1 - p_X - \Delta_X.
\end{equation*}
Substituting this lower bound on $X$ and the upper bound 
$D \le \overline{D}$ into the dynamics \eqref{eqn:sysdyn} at $G = \overline{G}$:
\begin{equation*}
    \lim_{G\to\overline{G}} \dot{G} 
    \le -S_G\overline{G} 
       - (\kappa_1 - p_X - \Delta_X)(\overline{G}+G_b) 
       + \overline{D}.
\end{equation*}
By feasibility condition \eqref{eq:feas1}, and since $S_G \ge 0$ and $\overline{G} > 0$ the term 
$-S_G\overline{G} \le 0$, so the right-hand side satisfies:
\begin{equation*}
    -S_G\overline{G} 
    - (\kappa_1-p_X-\Delta_X)(\overline{G}+G_b) 
    + \overline{D} < 0.
\end{equation*}
Hence $\lim_{G\to\overline{G}}\dot{G} < 0$, which contradicts 
\eqref{eqn:stage3_contradict_up}. Therefore $G(t)$ cannot 
reach $\overline{G}$.

\textbf{Lower boundary ($G \to \underline{G}$):}
Since $G(t) > \underline{G}$ for all $t < \tau_G$, reaching 
$G(\tau_G) = \underline{G}$ from above requires
\begin{equation}\label{eqn:stage3_contradict_low}\tag{C-III-L}
    \lim_{G(t)\to\underline{G}} \dot{G}(t) \le 0.
\end{equation}
As $G \to \underline{G}$, from \eqref{eqn:Xref} we have 
$X_{ref} \to 0$, so $\Psi(e_G) \to 0$ and thus 
$I_{ref} \to 0$. Using the funnel constraint 
$|\hat{X}-X_{ref}| < \rho_X(t) \le p_X$ and estimation 
bound $|X-\hat{X}| \le \Delta_X$:
\begin{equation*}
    X \le X_{ref} + \rho_X + \Delta_X \le p_X + \Delta_X.
\end{equation*}
Substituting this upper bound on $X$ and the bound 
$D(t) \ge 0$ into the dynamics \eqref{eqn:sysdyn} at $G = \underline{G}$:
\begin{equation*}
    \lim_{G\to\underline{G}} \dot{G} 
    \ge -S_G\underline{G} 
       - (p_X+\Delta_X)(\underline{G}+G_b) 
       + D(t).
\end{equation*}
By feasibility condition \eqref{eq:feas1.5}, and since $S_G \ge \underline{S}_G$ and $\underline{G} < 0$, 
we have $-S_G\underline{G} \ge -\underline{S}_G\underline{G} 
> (p_X+\Delta_X)(\underline{G}+G_b)$. Also $D(t) \ge 0$. So, the right-hand side satisfies:
\begin{equation*}
    -S_G\underline{G} - (p_X+\Delta_X)(\underline{G}+G_b) > 0.
\end{equation*}
Hence, $\lim_{G\to\underline{G}}\dot{G} > 0$, which 
contradicts \eqref{eqn:stage3_contradict_low}. Therefore 
$G(t)$ cannot reach $\underline{G}$.

Thus $G(t)$ cannot reach the boundaries $\underline{G}$ or $\overline{G}$, i.e., $G(t) \in [\underline{G}, \overline{G}]$ for all $t \ge 0$.

\vspace{0.2cm}
Since no violation time $\tau_G$, $\tau_X$, or $\tau_I$ 
exists, we conclude:
\begin{equation*}
    |e_I(t)| < 1, \quad |e_X(t)| < 1, \quad |e_G(t)| < 1, 
    \qquad \forall\, t \ge 0.
\end{equation*}
Therefore $G(t) \in [\underline{G},\overline{G}]$ and 
$u(t) = \overline{u}\,\Psi(e_I(t)) \in [0,\overline{u}]$ 
for all $t \ge 0$, completing the proof.
\end{proof}

The safety guarantee holds for any patient whose physiology lies within the admitted uncertainty set (Assumption~\ref{assum:param_bound}) and for all bounded meal disturbances (Assumption~\ref{assum:meal_bound}), without requiring individual model identification.

Section~\ref{sec:results} evaluates how these guarantees translate into glycemic regulation performance across two complementary settings: a MATLAB study with the nominal model and a high-fidelity preclinical simulator.

\section{ In Silico Evaluation Results and Discussion} \label{sec:results}

\subsection{Simulation Setup and Baseline Controllers}
The proposed GSTC is compared against four established closed-loop glucose control strategies: a PID controller with a predictive Control Barrier Function safety layer (PID+CBF)~\cite{saxena2025safe}, a Sliding Mode Controller (SMC)~\cite{OROZCOLOPEZ2026106931}, a Linear MPC (L-MPC)~\cite{SORU2012118}, and a Nonlinear MPC (NMPC)~\cite{magni2007mpc}. 

All controllers are evaluated using the same three-day, ten-meal simulation protocol, where meals are unannounced and not provided to the controllers in advance. Meals are modeled as carbohydrate disturbances entering the glucose dynamics through the standard absorption model given in~\cite{hovorka2004nonlinear}. Meal sizes between 40--70~g represent typical breakfast, lunch, snack, and dinner events. The resulting disturbance $D(t)$ follows a smooth rise-and-decay absorption profile, capturing the gradual appearance of glucose in the bloodstream after food intake rather than an instantaneous impulse.

To assess robustness to model mismatch, the controllers are tested on two patients with different physiological parameters. The parameters of Patient~1 are used as nominal model parameters for all controllers. As a result, when evaluated on Patient~2, the performance reflects the impact of model mismatch.

Performance is evaluated using standard clinical glucose metrics, following~\cite{Battelino2019}, together with statistical metrics. Clinical metrics include time in range (TIR), time in tight range (TITR), time above range (TAR), and time below range (TBR), which quantify the percentage of time glucose remains within clinically relevant regions. Statistical metrics such as maximum glucose, minimum glucose, mean glucose, interquartile range (IQR), and coefficient of variation (CV) are used to assess glucose excursions and glycemic variability. In addition, the peak insulin infusion rate $\overline{u}$ and average computational time per control step are reported to evaluate actuator usage and real-time computational efficiency. 

Note that all glucose trajectories shown in the subsequent plots correspond to the absolute glucose values rather than the deviation variables. This allows direct visualization of the clinically prescribed glucose safety bounds of 70--180~mg/dL.


\subsection{Safety Tube Verification: Single Patient Study}

The basal glucose level is $G_b = 80$~mg/dL, with the safety tube defined by $\underline{G} = -10$~mg/dL and $\overline{G} = 100$~mg/dL, corresponding to clinically established absolute bounds 70--180~mg/dL. To validate the formal safety guarantee of Theorem~\ref{thm:glucose_control}, we simulate the closed-loop system over a 72-hour period covering three days with varying meal disturbances. 

Instead of using a fixed meal schedule, the three-day program is varied to better reflect realistic patient behavior. Day~1 follows a regular breakfast, lunch, and dinner pattern with meal sizes 50~g, 70~g, and 40~g. Day~2 includes four meal events of 50~g, 65~g, 25~g, and 45~g, including an afternoon snack. Day~3 features a delayed evening meal with intakes of 55~g, 70~g, and 50~g. This variation allows us to test the controller across a range of meal sizes and timings.

Figure~\ref{fig:benchmark} shows the glucose trajectory $G(t)$ and pump input $u(t)$ over the full horizon for Patient~1. The glucose level remains strictly within the safe range across all meal events and fasting periods, confirming Theorem~\ref{thm:glucose_control}(i). The largest deviation occurs after the 70~g lunch on Day~1, where glucose level reaches about 167~mg/dL, which is 87~mg/dL above basal but still within the upper limit, leaving a margin of 13~mg/dL. The lower bound is never approached, and no hypoglycemic events occur during the entire 72-hour period. During fasting periods, glucose smoothly returns to the basal level $G_b$.

Each meal disturbance triggers a rapid increase in the insulin input $u(t)$ to counter the rise in glucose. The input remains below the maximum limit $\overline{u} = 144$~mU/min at all times, satisfying Theorem~\ref{thm:glucose_control}(ii). 

Having verified that all three conditions of Theorem~\ref{thm:glucose_control} are satisfied under varied meal disturbances, we next compare the controller with existing AP strategies across two patients to evaluate performance and robustness under parameter variability.

\subsection{Comparative Evaluation Against Baseline Controllers}

The proposed GSTC controller is now compared with PID+CBF, SMC, L-MPC, and NMPC using a 72-hour, 10-meal simulation in which meals are unannounced to all controllers. Table~\ref{tab:benchmark} summarizes the glycemic and computational performance for both patients, while Figures~\ref{fig:benchmark} and~\ref{fig:worst_benchmark} show the corresponding glucose and insulin profiles for Patients~1 and~2.

For Patient~1, GSTC achieves a TIR of 100.00\%, with a peak post-meal glucose of 155.62~mg/dL and a CV of 25.01\%. As shown in Figure~\ref{fig:benchmark}, glucose remains within the safe range $[70,\,180]$~mg/dL throughout all meals and fasting periods. L-MPC achieves the same TIR but requires insulin inputs that repeatedly hit the maximum limit and has a much higher computational cost (1.047~ms compared to 0.00225~ms per step). NMPC, even with perfect model knowledge, achieves a lower TIR (96.46\%) and reaches a higher peak glucose of 197.89~mg/dL, highlighting the absence of formal guarantees on glucose safety and constraint satisfaction.. PID+CBF exceeds the upper limit during several meals, while SMC keeps glucose close to the lower boundary, resulting in poor post-meal regulation.

\begin{figure*}[t]
    \centering
    \includegraphics[width=\linewidth]{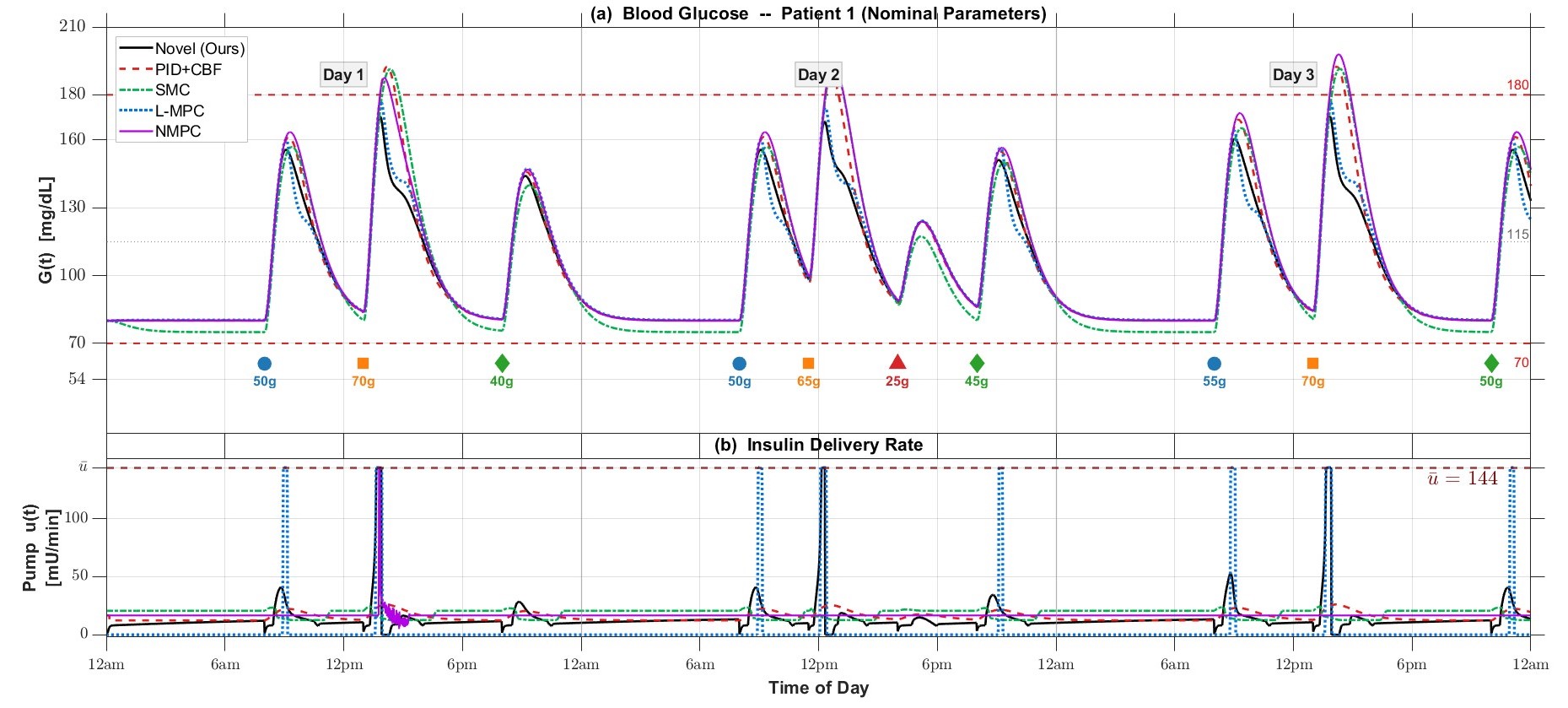}
    \caption{Closed-loop blood glucose $G(t)$ and insulin delivery rate $u(t)$ 
    for Patient~1 (nominal Bergman parameters) over the 72-hour, 10-meal 
    protocol under five controllers: Novel~(Ours), PID+CBF, SMC\, L-MPC, and 
    NMPC. Panel~(a) shows glucose trajectories with safety bounds (red dashed lines); panel~(b) shows the corresponding insulin pump delivery rate with the constraint $\overline{u} = 144$~mU/min.}
    \label{fig:benchmark}
\end{figure*}
\begin{figure*}[t]
    \centering
    \includegraphics[width=\linewidth]{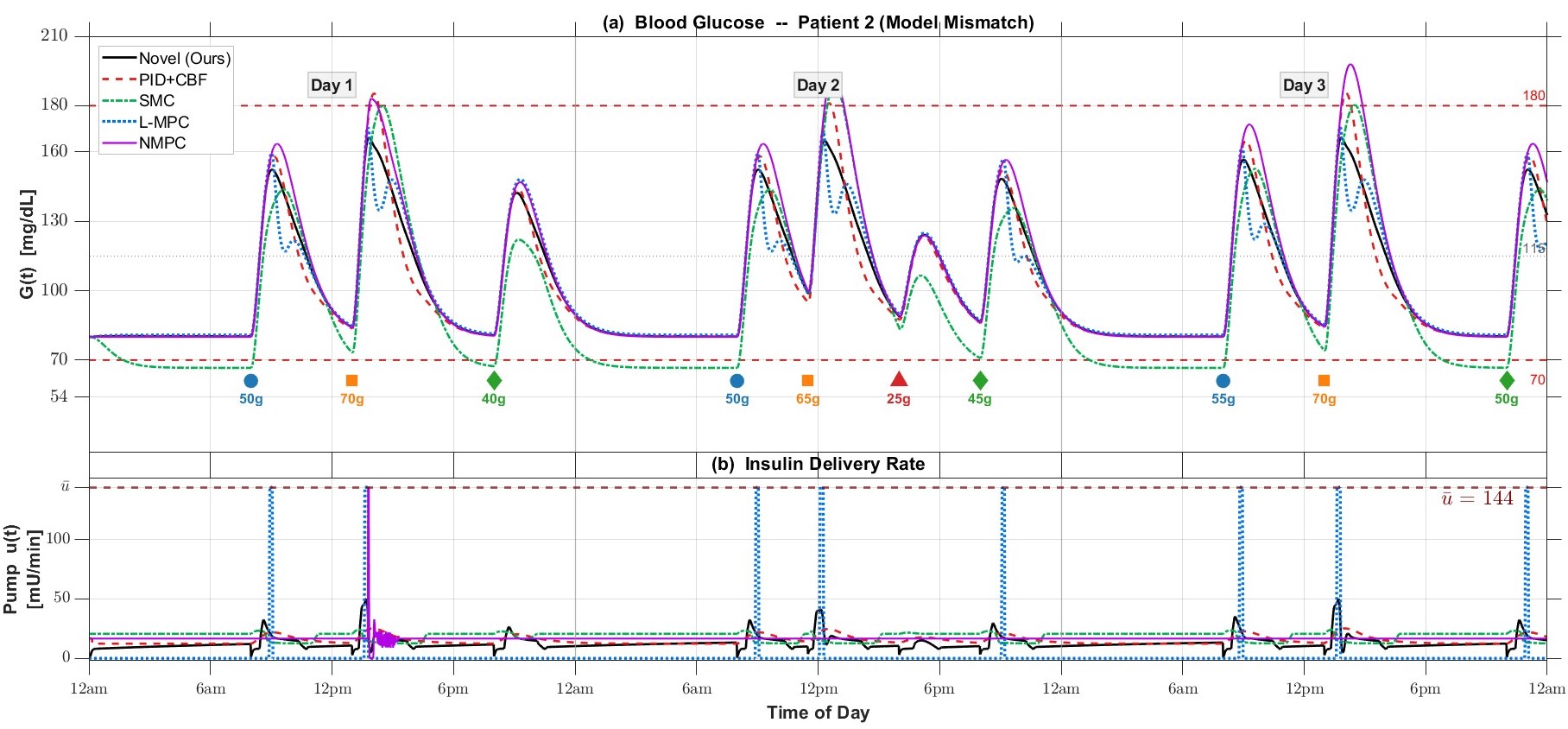}
    \caption{Robustness evaluation of the same five controllers as Figure~\ref{fig:benchmark} under model mismatch (Patient~2), while all controllers internally retain the nominal values. The proposed GSTC controller requires no plant parameters beyond $G_b$, and thus incurs no mismatch 
    penalty. Meals are unannounced to all controllers.}
    \label{fig:worst_benchmark}
\end{figure*}

\begin{table*}[t]
\centering
\footnotesize
\setlength{\tabcolsep}{2.0pt}
\renewcommand{\arraystretch}{1.0}
\caption{Closed-loop glycemic and computational performance. Protocol: 3-day, 10-meal (Day~1: regular; Day~2: busy with snack; Day~3: late dinner); meals unannounced ($d_{\mathrm{ctrl}}=0$ in all predictions). Patient~1: nominal Bergman model. Patient~2: insulin sensitivity $P_3\times3$; all baselines retain nominal $P_3$ (honest model mismatch). \textbf{Bold}: best value per column.}
\label{tab:benchmark}
\resizebox{0.9\textwidth}{!}{%
\begin{tabular}{l|rrrrr|rrrrr}
\toprule
 & \multicolumn{5}{c|}{\textbf{Patient~1 --- Nominal}} & \multicolumn{5}{c}{\textbf{Patient~2 --- Model Mismatch}} \\
\cmidrule(lr){2-6}\cmidrule(lr){7-11}
\textbf{Metric} & \textbf{GSTC (Ours)} & PID+CBF & SMC & L-MPC & NMPC & \textbf{GSTC (Ours)} & PID+CBF & SMC & L-MPC & NMPC \\
\midrule
\multicolumn{11}{l}{\textit{\textbf{A.\; Clinical Metrics}}} \\
TIR $[70,180]$ [\%] & \textbf{100.00} & 96.92 & 96.32 & 100.00 & 96.46 & \textbf{100.00} & 98.50 & 64.08 & 100.00 & 96.67 \\
TITR $[70,140]$ [\%] & \textbf{86.79} & 82.30 & 82.06 & 86.32 & 80.07 & 85.72 & 85.79 & 52.14 & \textbf{88.71} & 80.05 \\
TAR$_{180}$ $(180,250]$ [\%] & \textbf{0.00} & 3.08 & 3.68 & 0.00 & 3.54 & \textbf{0.00} & 1.50 & 1.27 & 0.00 & 3.33 \\
TAR$_{250}$ $>\!250$ [\%] & \textbf{0.00} & 0.00 & 0.00 & 0.00 & 0.00 & \textbf{0.00} & 0.00 & 0.00 & 0.00 & 0.00 \\
TBR$_{70}$ $[54,70)$ [\%] & \textbf{0.00} & 0.00 & 0.00 & 0.00 & 0.00 & \textbf{0.00} & 0.00 & 34.64 & 0.00 & 0.00 \\
\rowcolor{gray!12}\textbf{TBR$_{54}$ $<\!54$ [\%]} & \textbf{0.00} & 0.00 & 0.00 & 0.00 & 0.00 & \textbf{0.00} & 0.00 & 0.00 & 0.00 & 0.00 \\
\midrule
\multicolumn{11}{l}{\textit{\textbf{B.\; Statistical Metrics}}} \\
MAX [mg/dL] & \textbf{170.48} & 192.53 & 192.41 & 177.09 & 197.89 & \textbf{166.34} & 185.37 & 189.44 & 170.83 & 197.89 \\
IQR [mg/dL] & \textbf{42.44} & 43.03 & 50.91 & 42.49 & 48.32 & 42.89 & \textbf{36.57} & 52.64 & 39.35 & 48.32 \\
MIN [mg/dL] & \textbf{80.00} & 80.00 & 74.99 & 80.00 & 80.00 & \textbf{80.00} & 80.00 & 66.64 & 80.00 & 80.00 \\
Mean~G [mg/dL] & \textbf{102.97} & 105.21 & 103.79 & 103.27 & 107.15 & 103.36 & 102.46 & \textbf{96.40} & 102.64 & 107.12 \\
CV [\%] & \textbf{25.01} & 29.20 & 31.84 & 25.24 & 30.14 & 25.12 & 27.02 & 34.10 & \textbf{23.62} & 30.06 \\
Max~$u$ [mU/min] & 144.00 & 26.07 & \textbf{24.33} & 144.00 & 144.00 & 50.19 & 25.22 & \textbf{24.09} & 144.00 & 144.00 \\
\midrule
Comp. Time [ms/step] & 0.00225 & 0.00498 & 0.00209 & 1.04634 & 1.41472 & 0.00122 & 0.00397 & 0.00154 & 1.03662 & 1.41738 \\
\bottomrule
\end{tabular}}
\end{table*}

To evaluate robustness, the same controllers are tested on Patient~2 (Figure~\ref{fig:worst_benchmark}), whose insulin sensitivity is three times higher than the nominal model, while all the controllers continue to use the nominal parameters. GSTC again achieves a TIR of 100.00\% without any retuning. In contrast, the performance of the baseline controllers degrades significantly. SMC drops to a TIR of 64.08\%, with substantial time spent below range (TBR$_{70} = 34.64\%$) and a minimum glucose of 66.64~mg/dL. PID+CBF, despite being computationally faster, and NMPC both exceed the upper bounds post-lunch on all days, whereas L-MPC achieves 100.00\% TIR, although with higher computational and insulin delivery. 

Across both patients, GSTC consistently achieves the lowest peak post-meal glucose, the lowest CV, and the lowest eA1c among all controllers. It also delivers insulin in a smooth and bounded manner, using high input levels only when required by large meal disturbances, while maintaining minimal computational cost for easy deployment on wearable devices.

\begin{figure*}[t]
    \centering
    \includegraphics[width=\linewidth]{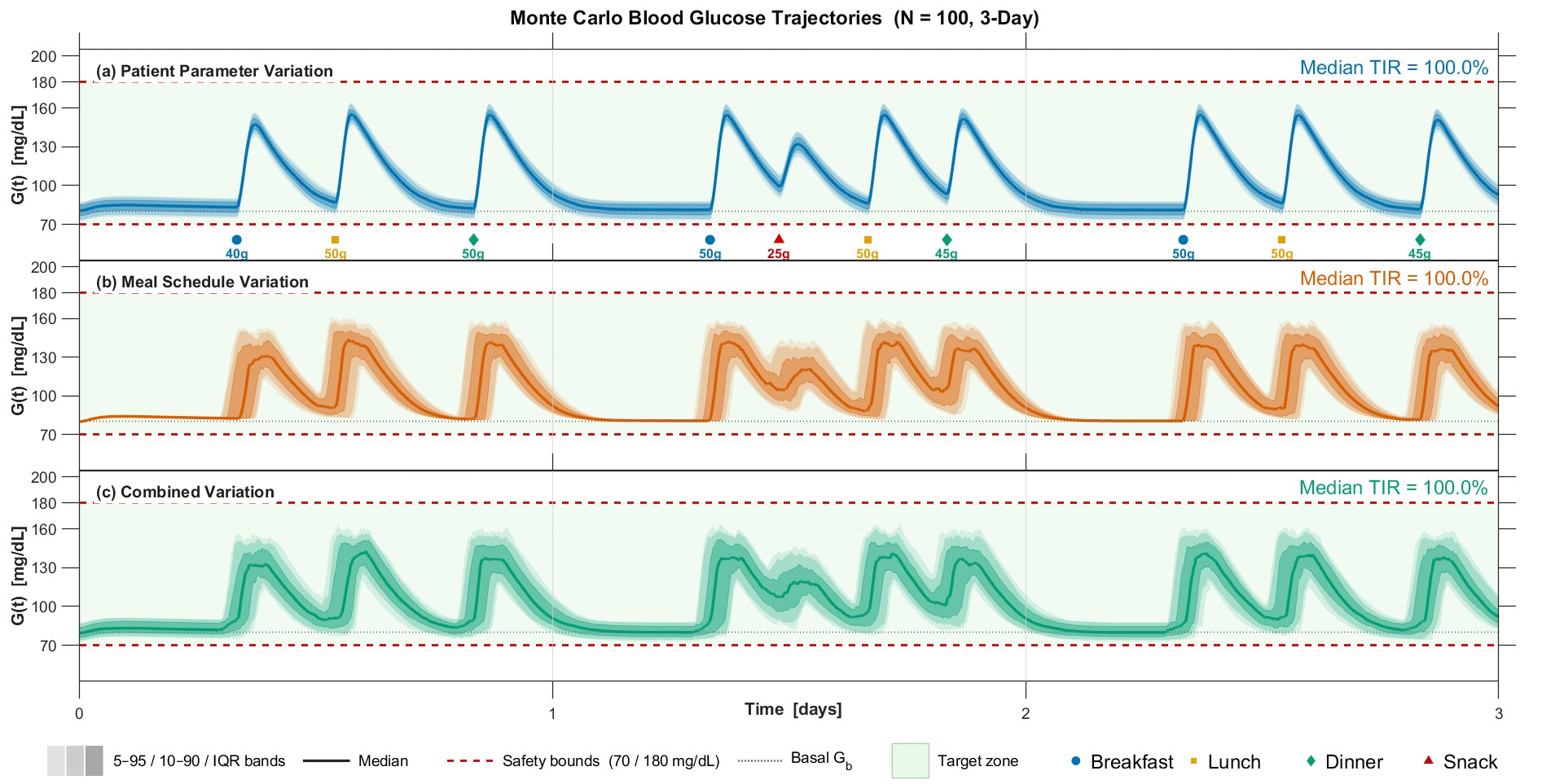}
    \caption{Monte Carlo blood glucose trajectories ($N = 100$, three-day horizon) 
    under three uncertainty scenarios: (a) patient parameter variation 
    ($\pm30\%$ in $S_G$, $P_2$, $P_3$; $\pm10\%$ in $G_b$, $I_b$), 
    (b) meal schedule uncertainty ($\pm60$ min timing, $\pm20$~g carbohydrate 
    content), and (c) combined parameter and meal uncertainty. Shaded bands 
    represent the 90th/80th percentile envelope and interquartile range (IQR) 
    of the ensemble; the solid line is the ensemble median. Safety bounds 
    $[70,\, 180]$~mg/dL are shown as red 
    dashed lines.}
    \label{fig:montecarlo}
\end{figure*}

\subsection{Robustness to Parameter and Meal Variability}

The previous comparison showed performance under two fixed parameters. We now examine whether these safety and performance properties hold when both patient physiology and meal patterns vary. To do this, we conduct a Monte Carlo study with 300 closed-loop simulations, divided into three uncertainty scenarios, with each scenario evaluated over a three-day horizon for $N = 100$ virtual patients.

Scenario~1 considers only physiological uncertainty by applying independent uniform variations of $\pm 30\%$ to the patient parameters $S_G$, $P_2$, and $P_3$, and $\pm 10\%$ to the basal values $G_b$ and $I_b$, consistent with reported variability in the Bergman model. Scenario~2 considers only meal uncertainty by varying meal timing by $\pm 60$~minutes and carbohydrate content by $\pm 15$~g per meal, while keeping patient parameters fixed. Scenario~3 combines both sources of uncertainty and represents the most realistic operating condition.

\begin{table}[t]
\centering
\caption{Monte Carlo Summary ($N = 100$, 3-day simulation)}
\label{tab:montecarlo}
\begin{tabular}{lccc}
\toprule
\textbf{Metric} & \textbf{Scenario 1} & \textbf{Scenario 2} & \textbf{Scenario 3} \\
\midrule
TIR [70--180] [\%]        & 100.00 $\pm$ 0.00  & 100.00 $\pm$ 0.00  & 100.00 $\pm$ 0.00  \\
TITR [70--140] [\%]       & 83.52 $\pm$ 4.73   & 85.03 $\pm$ 1.96   & 85.67 $\pm$ 5.17   \\
Mean G [mg/dL]            & 109.18 $\pm$ 4.98  & 107.93 $\pm$ 1.00  & 107.57 $\pm$ 5.25  \\
SD G [mg/dL]              & 24.82 $\pm$ 0.85   & 24.80 $\pm$ 0.72   & 24.58 $\pm$ 0.92   \\
CV [\%]                   & 22.77 $\pm$ 1.17   & 22.97 $\pm$ 0.51   & 22.90 $\pm$ 1.29   \\
Peak PP [mg/dL]           & 151.20 $\pm$ 5.37  & 150.27 $\pm$ 2.48  & 150.37 $\pm$ 6.21  \\
Max $u$ [mU/min]          & 117.22 $\pm$ 25.24 & 143.49 $\pm$ 3.44  & 140.05 $\pm$ 13.01 \\
\bottomrule
\end{tabular}%
\end{table}

The results across all 300 simulations are summarized in Table~\ref{tab:montecarlo}, with the full set of trajectories shown in Figure~\ref{fig:montecarlo}. 
In all scenarios, TIR remains $100\%$, and the coefficient of variation remains around $25\%$, well below the clinical threshold of $36\%$.
The responses remain tightly clustered around the median, narrowing during fasting periods and widening temporarily after meals before quickly recovering. As expected, Scenario~3 shows a slightly greater spread due to combined uncertainty, but all trajectories remain within the prescribed safety bounds. The input constraint is also satisfied in all simulations, with $u(t)$ always below $\overline{u} = 144$~mU/min.

These results show that the safety and performance properties established for the nominal case extend to a wide range of physiological and meal variability. Since the controller is designed using the three-state Bergman model, the next subsection examines whether this robustness carries over to a more complex glucose--insulin model.

\subsection{Preclinical Validation on High-fidelity Simulator}

The Monte Carlo results demonstrated reliability under uncertainty within the Bergman model. A more important question is how the proposed GSTC controller performs under a structurally different higher-order model. To test this, the GSTC controller is directly applied to a high-fidelity 13-state multi-compartmental simulator~\cite{xie2018simglucose, DallaMan2014}, which is the accepted \textit{in silico} standard for AP evaluation. This serves as a deliberate stress test, where controllers derived for a simple three-state model are applied to a much more complex system.

Two operating conditions are evaluated on six virtual patients under two conditions: a fasting protocol (no meals) to study steady-state behavior under model mismatch, and a three-meal protocol (40--60--40~g carbohydrate, totaling 140~g/day) to evaluate disturbance rejection. The results are summarized in Table~\ref{tab:clinical_sim}.
\begin{table*}[t]
\caption{Glycemic outcomes under fasting and post-prandial conditions on the High-Fidelity Simulator}
\label{tab:clinical_sim}
\centering
\begin{tabular}{l cccc cccc}
\toprule
\multirow{2}{*}{\textbf{Patient}} & \multicolumn{4}{c}{\textbf{0-meal}} & \multicolumn{4}{c}{\textbf{3-meal (40--60--40 g CHO)}} \\
\cmidrule(lr){2-5} \cmidrule(l){6-9}
 & \textbf{Mean $\pm$ SD} & \textbf{TIR\%} & \textbf{TBR$_{70}$\%} & \textbf{TAR$_{180}$\%} & \textbf{Mean $\pm$ SD} & \textbf{TIR\%} & \textbf{TBR$_{70}$\%} & \textbf{TAR$_{180}$\%} \\
\midrule
Patient 1 & 141.1 $\pm$ 14.5 & 100.00 & 0.00 & 0.00 & 132.8 $\pm$ 30.6 & 92.08 & 0.00 & 7.92 \\
Patient 2 & 138.6 $\pm$ 14.9 & 99.79 & 0.00 & 0.21 & 137.6 $\pm$ 30.3 & 91.88 & 0.00 & 8.13 \\
Patient 3 & 136.5 $\pm$ 21.7 & 98.96 & 0.00 & 1.04 & 149.0 $\pm$ 26.2 & 89.58 & 0.00 & 10.42 \\
Patient 4 & 145.4 $\pm$ 13.2 & 99.17 & 0.00 & 0.83 & 153.2 $\pm$ 21.7 & 90.83 & 0.00 & 9.17 \\
Patient 5 & 137.8 $\pm$ 18.2 & 100.00 & 0.00 & 0.00 & 143.2 $\pm$ 22.9 & 97.29 & 0.00 & 2.71 \\
Patient 6 & 135.3 $\pm$ 19.3 & 100.00 & 0.00 & 0.00 & 143.8 $\pm$ 20.7 & 98.75 & 0.00 & 1.25 \\
\midrule
Mean & 139.1 $\pm$ 17.0 & 99.65 & 0.00 & 0.35 & 143.3 $\pm$ 25.4 & 93.40 & 0.00 & 6.60 \\
\bottomrule
\end{tabular}%
\quad\\
\vspace{1ex} 
{\raggedright \small \textit{Note:} TIR = time in range (70--180 mg/dL); TBR$_{70}$ = time below 70 mg/dL; TAR$_{180}$ = time above 180 mg/dL; SD = standard deviation; CHO = carbohydrate. TBR$_{54}$ and TAR$_{250}$ = 0.00\% in all cases (not shown).\par}
\end{table*}

Under fasting conditions, the controller achieves a mean TIR of 99.65\%, far exceeding the clinical target of $70\%$~\cite{Battelino2019}. No hypoglycemic events occur in any of the six patients, with TBR is $0\%$ throughout. Under the three-meal protocol, the mean TIR across patients is 93.40\%, well above the clinical target of 70\%~\cite{Battelino2019}. Importantly, severe hypoglycemia (TBR$_{54}$) and severe hyperglycemia (TAR$_{250}$) are zero for all patients under both protocols. The TAR values in the post-meal case, ranging from 2.71\% to 10.42\%, indicate that deviations after meals, although not fully eliminated, are still contained within a non-severe range. This is expected due to model mismatch and the absence of meal anticipation.

These results provide empirical evidence that the GSTC controller performs well and retains its safety and performance properties when directly applied to more realistic, higher-order models.

\begin{remark}
The formal guarantees of Theorem~\ref{thm:glucose_control} are derived for the three-state Bergman model. Extending these to the full 13-state glucose--insulin model remains an open problem and is part of future work. The results presented here are an intentional model-mismatch stress test to empirically assess robustness beyond the design model.
\end{remark}

\section{Discussion and Future Work}
\label{sec:discussion}

The results above consistently highlight three key properties that distinguish proposed GSTC from existing approaches.

Across all scenarios, GSTC consistently maintains glucose within clinically established safe bounds. This performance is not the result of tuning, but follows directly from the formal guarantees of the control design and the feasibility conditions. In contrast, existing approaches do not provide such guarantees, which is reflected in their performance under disturbances and model mismatch.

Due to its model-free nature, GSTC does not rely on patient-specific parameters and remains robust to both patient variability and unannounced meal disturbances. In comparison, model-based approaches degrade significantly under model mismatch, which is unavoidable in practice.

The closed-form structure of GSTC also leads to very low computational cost. The controller runs several orders of magnitude faster than optimization- and learning-based methods, making it well-suited for real-time implementation on low-cost wearable devices.

The preclinical validation on the 13-state simulator further supports these findings. Although the controller is designed using the three-state Bergman model, it is applied without retuning to a more complex system. The results show a mean TIR of 99.65\% under fasting conditions and 93.40\% under three unannounced meals, both above the clinical target of 70\%~\cite{Battelino2019}, with no hypoglycemic events.

Building on these results, several directions can further advance the framework toward clinical translation. First, the formal safety guarantees can be extended from the three-state Bergman model to higher-dimensional and more realistic glucose--insulin models. Second, incorporating subcutaneous insulin delivery dynamics and CGM measurement delays is important, as these introduce an effective lag between actuation and physiological response. Integrating predictive elements to anticipate glucose excursions and take proactive control actions, while retaining formal safety guarantees, is a key next step. Finally, the transformation function used in this work represents one instance within a broader class. Future work will explore learning more control-efficient transformation functions, for example, using physics-informed neural networks (PINNs). Moving toward hardware-in-the-loop validation is a natural next step toward clinical deployment.

\section{Conclusion}
\label{sec:conclusion}

This paper presented the Glycemic Safety Tube Control (GSTC) framework, a closed-form and model-free approach for automated insulin delivery that guarantees glucose remains within prescribed physiological bounds under bounded meal disturbances and parameter uncertainty. The cascaded control design enforces safety across the glucose, remote insulin, and plasma insulin subsystems without requiring patient-specific identification, online optimization, or iterative computation. The derived feasibility conditions provide a clear link between controller parameters, disturbance bounds, and the resulting safety guarantees.

The evaluation shows that GSTC maintains perfect glucose containment across both nominal and mismatched patients, as well as across all Monte Carlo uncertainty scenarios, with zero hypoglycemic events. When applied to a structurally different high-fidelity simulator, the controller continues to perform well, achieving TIR values well above clinical targets while maintaining safety.

\bibliographystyle{unsrt} 
\bibliography{cas-refs} 

\appendix
\section{Appendix}
\begin{figure*}[htbp]
    \centering
    \includegraphics[width=0.8\linewidth]{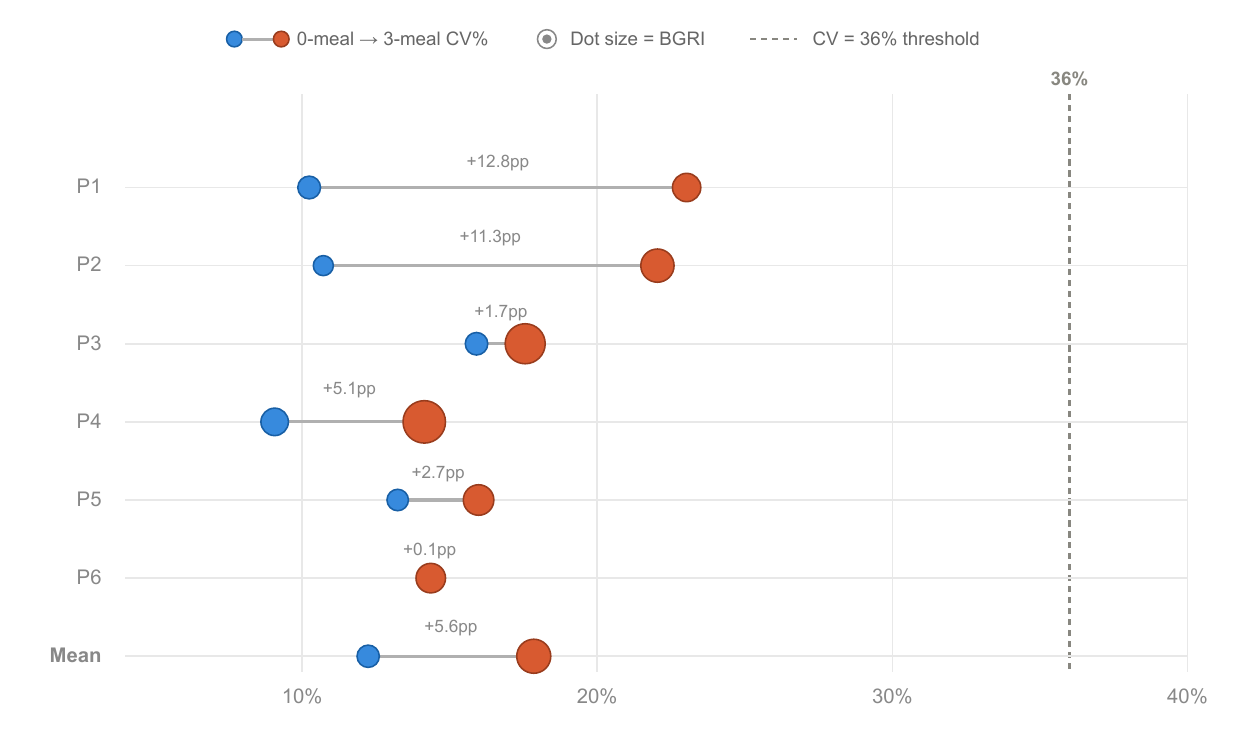}
    \caption{Intra-patient shift in coefficient of variation
    (CV\%) from fasting (0-meal, blue) to post-prandial (3-meal,
    coral) conditions~\cite{Battelino2019}. Line length encodes
    the magnitude of meal-induced variability increase. The dashed
    vertical line marks the 36\% instability threshold; all patients
    remain below this boundary under both conditions. Patients~1
    and~2 show the largest meal-induced escalation (+12.8 and +11.3
    percentage points respectively), indicating pronounced
    postprandial sensitivity, while Patients~3 through~6 exhibit
    more contained responses, with Patient~6 showing negligible
    meal-induced variability (+0.1 percentage points).}
    \label{fig:clinical_sim}
\end{figure*}
Glycemic variability remained well-controlled throughout the pre-clinical validatin on high fidelity Simulator. As
shown in Figure~\ref{fig:clinical_sim}, all six patients remained
below the 36\% CV instability threshold~\cite{Battelino2019}
under both fasting and post-prandial conditions. The mean CV
increase from fasting to three-meal conditions was 5.6
percentage points. Patients~1 and~2 exhibited the largest
meal-induced CV escalation (+12.8 and +11.3 percentage points
respectively), reflecting pronounced postprandial sensitivity
in those profiles, while Patients~3 through~6 showed more
modest increases, with Patient~6 exhibiting near-zero
meal-induced variability (+0.1 percentage points).

\end{document}